\title{One-Bit Sigma-Delta modulation on the circle}
\author{Olga Graf, Felix Krahmer, and Sara Krause-Solberg}
\newcommand\blfootnote[1]{%
	\begingroup
	\renewcommand\thefootnote{}\footnote{#1}%
	\addtocounter{footnote}{-1}%
	\endgroup
}
\newtheorem{theorem}{Theorem}
\newtheorem{lemma}{Lemma}
\newtheorem{proposition}{\bf Proposition}
\def\supp{\operatorname{supp}}
\begin{document}
	
\maketitle

\begin{abstract}
Manifold models in data analysis and signal processing have become more prominent in recent years. In this paper, we will look at one of the main tasks of modern signal processing, namely, at analog-to-digital (A/D) conversion in connection with a simple manifold model - the circle. We will focus on Sigma-Delta modulation which is a popular method for A/D conversion of bandlimited signals that employs coarse quantization coupled with oversampling. Classical  Sigma-Delta schemes provide mismatches and large errors at the initialization point if the signal to be converted is defined on the circle. In this paper, our goal is to get around these problems for Sigma-Delta schemes. Our results show how to design an update for the first and the second order schemes based on the reconstruction error analysis such that for the updated scheme the reconstruction error is improved.
\end{abstract}

\begin{IEEEkeywords}
Analog-to-digital conversion, quantization, bandlimited functions, Sigma-Delta modulation, one-bit, unit circle, manifold
\end{IEEEkeywords}

\blfootnote{This work was presented at IEEE Statistical Signal Processing Workshop (SSP), Freiburg, Germany, 2018 \cite{KSGK2018}, and at the 13th Intl. Conf. on Sampling Theory and Applications (SampTA), Bordeaux, France, 2019 \cite{GKKS2019}.
	
The authors are with the Dept. of Mathematics, Technical University of Munich, Boltzmannstraße 3, 85748 Garching/Munich, Germany, Email: \{olga.graf, felix.krahmer\}@ma.tum.de and with the Institute of Mathematics, Hamburg University of Technology, Am Schwarzenberg-Campus 1, 21073 Hamburg, Germany, Email: s.krause-solberg@tuhh.de.
	
The support by the German Science Foundation (DFG) under Grant SFB Transregio 109, Discretization in Geometry and Dynamics, is gratefully acknowledged by the authors.}

\section{Introduction}

\subsection{Analog-to-digital conversion}

Whenever data is processed using computers, an analog-to-digital (A/D) conversion is used. It comprises two stages: sampling and quantization. First of all, sampling converts an analog signal into a discrete signal. In the process of quantization, each value of this discrete signal is mapped to some element from a finite alphabet, thus becoming amenable to digital storage. The extreme case when the quantization alphabet consists of only two elements is called \textit{1-bit quantization}. Quantization maps can vary from simple memoryless quantization, where each value is replaced with an element of the alphabet that best approximates it, to more complicated maps based on recurrence relations.

Digital-to-analog (D/A) conversion, a reverse process during which  the original signal is reconstructed from quantized values, is usually carried out by applying an appropriate low-pass filter. The accuracy of such reconstruction serves as a main criteria for the quality assessment of the quantization scheme.

\subsection{Sigma-Delta modulation in mathematical literature}

In this paper, we will focus on \textit{Sigma-Delta} ($\varSigma\varDelta$) \textit{modulation} which is a popular method for the quantization of bandlimited functions. The underlying idea of this method is to compute the quantized values recursively via a difference equation such that the low frequency content of the quantized representation approximates the signal. On the other hand, the quantization error is designed to approximately form a high-pass sequence (so called “noise shaping”) and is later removed to a large extent in the reconstruction procedure via a low-pass filter. The power of this quantization method lies in the fact that it makes the design of cheap analog circuits of low complexity possible; namely, it allows for high accuracy with an arbitrarily coarse quantization alphabet at the expense of oversampling.

$\Sigma\Delta$ modulation has been known to circuit engineers since the 1963 pioneering work \cite{Inose1963} of Inose and Yasuda. It was later followed by a rigorous mathematical study initiated by Daubechies and DeVore \cite{Daubechies2003} in the early 2000's. Since then, the mathematical literature on this method of quantization has grown rapidly. Early papers on $\Sigma\Delta$ modulation dealt with the classical case of bounded bandlimited functions on the real line and focused on the reconstruction accuracy as a function of oversampling rate \cite{Gunturk2003,Deift2011,KrahmerWard2012,CD02}. Later, $\Sigma\Delta$ modulation schemes were extended to finite frame expansions in \cite{Krahmer2012,Benedetto2006,BenedettoPowell2006}. A number of works also study  $\Sigma\Delta$ modulation in combination with compressed sensing \cite{Krahmer2014,Gunturk2013,Saab2018}. For an overview of $\Sigma\Delta$ modulation in various settings and more general classes of noise shaping methods, we refer the reader to \cite{CGKSY15}.

Despite the growing importance of manifold models in data and signal processing, there is still little quantization literature available for such models, both for $\Sigma\Delta$ modulation and other quantization methods.
For the special case of Grassmannian manifolds, which arise naturally in wireless communications, there appeared some papers studying their quantization properties \cite{Modal2007}. Some results for manifolds were obtained in connection with 1-bit compressed sensing. Namely, the problem of recovering an unknown data point on a given manifold from 1-bit quantized random measurements was studied in \cite{IKKSM2018}. Recently, this work was further extended in \cite{ILNS2019} to incorporate $\Sigma\Delta$ modulation schemes.

\subsection{Our motivation and contribution}

In this work, we aim to provide a first step towards a theory for $\Sigma\Delta$ modulation to represent functions defined on a closed manifolds via its quantized samples.
Our work is application-driven: there exists a connection between 1-bit $\Sigma\Delta$ modulation and a printing technique known as digital halftoning \cite{Kite1997}. In fact, halftoning can be done by means of 1-bit $\Sigma\Delta$ modulation. The emerging 3D printing technologies allow for applying halftoning algorithms on various closed surfaces \cite{MSUA2017}. In order to pave the way to employ halftoning via $\Sigma\Delta$ modulation for printing on closed surfaces, we will start with the problem of $\Sigma\Delta$ modulation on the simplest closed one-dimensional manifold.

More precisely, we generalize the classical $\Sigma\Delta$ modulation of bandlimited functions on the real line to the case of bandlimited functions on the one-dimensional torus.
We show that $\Sigma\Delta$ schemes encounter mismatches when applied to functions defined on such domain. In order to avoid these mismatches and to improve the reconstruction error, we propose an update, namely, a small uniform shift by an appropriate amount applied to the input function. Our motivation is that a small constant shift of the whole image will hardly be visible, whereas spatially varying errors of the same size can be perceived as artefacts.

Our contributions are twofold:

1) First, we provide the error analysis for the $m$-th order $\Sigma\Delta$ schemes on the circle (see Proposition 2).

2) Motivated by the error analysis, we show how to find an update (uniform shift) for the first and the second order $\Sigma\Delta$ schemes such that the shifted function is recovered with an accuracy of $O(N^{-1})$ and $O(N^{-2})$, respectively (see Theorems 1, 2), where $N$ is the number of samples. The second order scheme is of special interest as in general case it is not possible to obtain $O(N^{-2})$ for this scheme without update.

\subsection{Organization of the paper}

The paper is organized as follows. In Section II, we review the basics of $\Sigma\Delta$ modulation as well as sampling, quantization and reconstruction of bandlimited functions whose domain is the unit circle. In Section III, we give the error analysis for the $m$-th order $\Sigma\Delta$ schemes on the circle. In Section IV, we state our main result: a uniform update for the first and the second order $\Sigma\Delta$ scheme. We also briefly discuss how one can possibly find updates for higher order schemes. In Section V, we support our theoretical findings with results from numerical simulations and conclude with suggestions for future work in Section VI.

\section{Preliminaries}

\subsection{Basics of $\varSigma\varDelta$ modulation}

Given a (finite or infinite) sequence of samples $(y_{n})_{n\in\mathbb{Z}}$, the \emph{1-bit $\varSigma\varDelta$ quantizer} runs the following
iteration for $n\in\mathbb{Z}$:
\begin{equation}\label{sigmadelta}
	\begin{split}
		v_n&=(h\ast v)_{n}+y_n-q_{n},\\
		q_n&=\mathrm{sign}((h\ast v)_{n}+y_n),
	\end{split}
\end{equation}
where $q_{n}$ are the quantized values, $v_{n}$ are the state variables, $h$ is the feedback filter with $k$ tabs described by the recurrence relation $(h\ast v)_{n}:=\sum_{j=0}^{k}h_{j}v_{n-j}$, and
\begin{align*}
	\mathrm{sign}(x):=\begin{cases}
		1, &x>0,\\
		-1, &x\leq 0.\\
	\end{cases}
\end{align*}
As this choice of $q_{n}$ minimizes $|v_{n}|$ at each time instance, the second line in \eqref{sigmadelta} is referred to as the \emph{greedy quantization rule}.

Sometimes, it is useful to rewrite the first line in \eqref{sigmadelta} in terms of another state variable $u_n$. If $y_n-q_n$ can be rewritten as the $m$-th order backward finite difference of some bounded sequence $u_n$, i.e., 
\vspace{-0.2cm}
\begin{equation}\label{sigmadelta:m}
	\Delta^m u_n=y_n-q_n,
\end{equation}
then it is said that \eqref{sigmadelta} is the $m$\emph{-th order} 1-bit $\Sigma\Delta$ quantizer. Namely, if we require that for some finitely supported $g$,
\begin{equation}\label{sigmadelta:m2}
\Delta^m g=\delta^0-h,
\end{equation}
where $\delta^0$ denotes the Kronecker delta, the new state variable $u_n$ which satisfies \eqref{sigmadelta:m} can be written in terms of $v_n$ as
\begin{equation}\label{sigmadelta:m3}
u_n=(g\ast v)_{n}.
\end{equation}

In this paper, we assume that all feedback filters $h$ are taken from the class of filters with minimal support, i.e., $|\supp h|=m$ (see \cite{Gunturk2003,Deift2011} for details). In particular, for the second order filters with $k$ tabs, we consider structure
\begin{equation}\label{filter2}
h=(0,h_1,0,\dots,0,h_k),
\end{equation}
where $h_1=k/(k-1)$ and $h_k=1-h_1$.

For a  $\Sigma\Delta$ quantizer to be useful in practice it should be stable. Here, stability means that there exists a number $\mu>0$ such that for each bounded input sequence $(y_{n})_{n}$ with $\|y\|_{\ell^{\infty}}\leq \mu$, the sequence of state variables $(v_{n})_{n}$ (or $(u_{n})_{n}$) is also a bounded sequence. The first order $\Sigma\Delta$ quantizer is known to be stable. In case of higher order schemes, the feedback filter should be chosen carefully so that stability is guaranteed.


The following stability criterion provides a sufficient condition for the stability of a $\Sigma\Delta$ scheme \eqref{sigmadelta}. This criterion is well-known to the engineering community (e.g., \cite{SchSn1991}), but had not been used in a rigorous mathematical framework until the work of G\"unt\"urk \cite{Gunturk2003}.

\begin{proposition}[Stability Criterion]\cite{Gunturk2003,SchSn1991}
	Consider a $\varSigma\varDelta$ quantizer given by the recurrence relation \eqref{sigmadelta}. If
	\begin{equation}\label{eq:stability} 
		\|h\|_{\ell^1} \leq 2-\|y\|_{\ell^{\infty}},
	\end{equation}
	then the modulator is stable for all inputs $y_n$.
\end{proposition}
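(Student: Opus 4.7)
The plan is to prove stability by induction, showing that $|v_n|\leq 1$ for all $n$ whenever this bound already holds for previous indices.

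First, I would analyze the greedy quantization rule itself. Writing $w_n := (h\ast v)_n + y_n$, the update rule gives $v_n = w_n - \operatorname{sign}(w_n)$. A direct case analysis on the sign of $w_n$ shows that the map $w \mapsto w - \operatorname{sign}(w)$ sends the interval $[-2,2]$ into $[-1,1]$; equivalently, $|v_n|\leq 1$ whenever $|w_n|\leq 2$. This is the geometric fact that makes the greedy rule "contractive" in the appropriate sense.

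Second, I would set up the induction. Assuming (as is standard) that the state variables $v_j$ vanish for $j$ sufficiently negative — or more generally that $|v_j|\leq 1$ for some initial segment — I would take as induction hypothesis that $|v_j|\leq 1$ for all $j < n$. By the triangle inequality and the definition of discrete convolution,
\begin{equation*}
	|(h\ast v)_n| \;\leq\; \sum_{j=0}^{k} |h_j|\, |v_{n-j}| \;\leq\; \|h\|_{\ell^1}.
\end{equation*}
Combined with $|y_n|\leq \|y\|_{\ell^\infty}$, this yields $|w_n| \leq \|h\|_{\ell^1} + \|y\|_{\ell^\infty}$. Invoking the hypothesis \eqref{eq:stability}, the right-hand side is at most $2$, so by the first step $|v_n|\leq 1$, completing the induction.

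Finally, I would note that the convolution $(h\ast v)_n$ only uses indices $n-j$ with $0\leq j\leq k$, i.e., strictly past state variables (since $h_0 = 0$ or the convention is causal), so the induction is well-posed; and that the constant $\mu$ in the definition of stability can be taken so that $\|v\|_{\ell^\infty}\leq 1$ explicitly. The main (minor) obstacle is being careful about the definition of $\operatorname{sign}$ at zero — since the paper defines $\operatorname{sign}(0) = -1$ — but the case analysis of $w \mapsto w - \operatorname{sign}(w)$ on $[-2,2]$ still yields $|v_n|\leq 1$ regardless of this convention. No delicate estimates are required; the proof is essentially a one-line observation about the greedy rule combined with the convolution bound.
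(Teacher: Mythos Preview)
Your argument is correct and is precisely the standard inductive proof of this stability criterion. Note, however, that the paper does not actually give its own proof of this proposition: it is stated as a known result with citations to \cite{Gunturk2003,SchSn1991}. That said, the very same inductive step you describe appears verbatim later in the paper, inside the proof of Theorem~2, where the authors write: ``suppose $\tilde v_{n-1},\dots,\tilde v_{n-k}\in(-1,1)$ \dots\ Then $|(h\ast \tilde v)_{n}+\tilde y_n|< \|h\|_{\ell^1} + \|\tilde y\|_{\ell^{\infty}} \leq 2$ \dots\ It then follows that $|\tilde v_n|<1$.'' So your approach coincides exactly with the argument the paper implicitly relies on.
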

For the first order quantizer (i.e., with the feedback filter $h=(0,1)$), Proposition 1 implies $\|y\|_{\ell^{\infty}}\leq 1$. Although not crucial for stability, this bound guarantees reasonable reconstruction from quantized values with range $[-1,1]$ as in \eqref{sigmadelta}.

The reconstruction error analysis of $\Sigma\Delta$ modulation typically assumes stability as it relies on the boundedness of the state variables $v_{n}$ (or $u_{n}$). A reconstruction error bound for 1-bit $\Sigma\Delta$ quantizer in \cite{Daubechies2003} establishes a polynomial error decay of $O(\lambda^{-m})$, where $\lambda$ is a factor, by which the function is oversampled with respect to its Nyquist frequency. Later, an exponential error decay of $O(2^{-r\lambda})$ with $r\approx 0.102$ was achieved in \cite{Gunturk2003,Deift2011} by combining $\Sigma\Delta$ schemes of different orders.
This is the best known error decay rate, which is also known to be optimal \cite{KrahmerWard2012,CD02}. In this paper, we will follow the approach of \cite{Deift2011}, but only for fixed orders, hence expecting polynomial error bounds similar to those obtained in \cite{Daubechies2003}. This will be sufficient for achieving our main goal: understanding how to update the $\Sigma\Delta$ modulation scheme to make it fit our setting and assessing the improvement in accuracy of reconstruction after the update.

\subsection{Analog-to-digital conversion on the circle}

Let us briefly review sampling, quantization and reconstruction of $K$-bandlimited functions (i.e., $\supp(\widehat f)\subset[-K,K]$) $f\in L^2(\mathbb T)$ whose domain is the unit circle $\mathbb{T}=\{z\in{\mathbb{C}}:|z|=1\}$.
Uniformly sampling $f$ on a unit circle at a rate $N/2\pi$ $(N\in\mathbb{N})$ gives the samples 
\begin{equation}
	y_{n}:=
	f\left(\frac{2\pi n}{N}\right), \qquad n\in\{0,1,...,N-1\}.
\end{equation}
Note that there is a one-to-one correspondence between the samples of $2\pi$-periodic functions on the real line and the samples of functions on the unit circle. However, there will be no such correspondence between the quantized values.

Motivated by Shannon's sampling theorem, we define a reproducing kernel $\varphi^K\in L^2(\mathbb T)$ such that
\begin{align}\label{kernel:cond}
	\widehat{\varphi}^{K}(\xi)=
	\begin{cases}
		1,\quad |\xi|\leq K,\\
		0,\quad |\xi|> K.
	\end{cases}
\end{align}

By the Fourier inversion theorem and some computations, we have
\begin{equation}\label{repkernel}
	\varphi^K(x)=\frac{\sin\left((2K+1)\frac x2\right)}{\sin\left(\frac x2\right)}.
\end{equation}
In fact, \eqref{repkernel} is the Dirichlet kernel whose convolution with any $2\pi$-periodic function gives the $n$-th degree Fourier series approximation of that function. Therefore, \eqref{repkernel} is indeed the reproducing kernel for $K$-bandlimited functions on $\mathbb{T}$.

Taking the Fourier series expansion of $\widehat{f}$ and using the inverse Fourier transform yields
\begin{align}\label{shannon_circle} 
	f(t)=\frac{1}{N} \sum_{n=0}^{ N-1}f\left(\frac{2\pi n}{ N}\right) \varphi^K\left(t-\frac{2\pi n}{ N}\right)
\end{align}
for any $N=2\lambda K+1$, where $\lambda\in\mathbb R_{\ge1}$ is the oversampling parameter. Formula \eqref{shannon_circle} is the analog of Shannon's interpolation formula for functions defined on a unit circle $\mathbb{T}$.

The standard approach to function recovery from its quantized values on $\mathbb{R}$ uses the Shannon's interpolation formula, where the samples are replaced by the quantized values. For the recovery from quantized values on $\mathbb{T}$ we have, consequently,
\begin{align}\label{shannonquantized_circle}  
	f_r(t)=\frac{1}{N} \sum_{n=0}^{ N-1}q_{n} \varphi^K\left(t-\frac{2\pi n}{ N}\right),
\end{align}
where $f_r(t)$ denotes the reconstructed function.

Let the instantaneous error $e(t)$ at time $t$ be given by the pointwise difference
\begin{align}
	\left|f(t)-f_r(t)\right| = \frac{1}{N} \left|\sum_{n=0}^{ N-1}\left(y_{n}-q_{n}\right) \varphi^K\left(t-\frac{2\pi n}{ N}\right)\right|.
\end{align}
The quality of reconstruction can be measured by a variety of functional norms on $e(t)$. 
Here, we will use the norm $\|e\|_{L^{\infty}}$ which is one of the standard choices.

\section{Error analysis for $\Sigma\Delta$ schemes on a circle}

Let us start with the error analysis for the first order $\Sigma\Delta$ scheme and assume initialization ${u}_{-1}:=0$. Hereinafter we denote $\varphi_{n}^{K}:=\varphi^{K}\left(t-\frac{2\pi n}{N}\right)$.
\begin{align}\label{Eq:Rec:Err:1st}
	|f(t)-f_r(t)|&=\frac{1}{N}\left|\sum_{n=0}^{N-1}(y_n-{q}_n)\varphi_{n}^{K}\right|=\frac{1}{N}\left|\sum_{n=0}^{N-1}(u_n-{u}_{n-1})\varphi_{n}^{K}\right|\nonumber \\
	&=\frac 1N \left| \sum_{n=0}^{N-1} u_n \left(\varphi_{n}^{K} -\varphi_{n+1}^{K}\right)-u_{-1}\varphi_{0}^{K}+ u_{N-1}\varphi_{N}^{K}\right|\nonumber\\
	&=\frac 1N \left| \sum_{n=0}^{N-1} u_n \left(\varphi_{n}^{K} -\varphi_{n+1}^{K}\right)+ u_{N-1}\varphi_{N}^{K}\right|\nonumber\\
	&\leq\frac{\|u\|_{\ell^{\infty}}}{N}\sum_{n=0}^{N-1}\left|\varphi_{n}^{K} -\varphi_{n+1}^{K}\right|+\frac{|u_{N-1}|}{N}\big\|\varphi^{K}\big\|_{L^{\infty}}\nonumber\\
	&\leq\frac{\|u\|_{\ell^{\infty}}}{N}\sum_{n=0}^{N-1}\int_{t-\frac{2\pi(n+1)}{N}}^{t-\frac{2\pi n}{N}}\big|(\varphi^{K})'(y)\big|dy+\frac{|u_{N-1}|}{N}\big\|\varphi^{K}\big\|_{L^{\infty}}\nonumber\\
	&\leq \frac{\|u\|_{\ell^{\infty}}}{N}\big\|(\varphi^{K})'\big\|_{L^1}+\frac{|u_{N-1}|}{N}\big\|\varphi^{K}\big\|_{L^{\infty}}.
\end{align}

In contrast to the case of the real line with an infinite number of samples, here the boundary terms of summation by parts in \eqref{Eq:Rec:Err:1st} remain. Thus we get an additional error term depending on the last state variable $u_{N-1}$.

In order to extend this result to the $m$-th order $\Sigma\Delta$ schemes, we will need two following lemmas.

\begin{lemma}
	For the sequences $(\varphi_{n}^{K})_{n\in\mathbb N}$ and $(u_n)_{n\in\mathbb N}$ we have
	\begin{align}
		\sum_{n=0}^{N-1}\hspace{-0.1cm}\Delta^m u_n \varphi_n^{K}&= (-1)^m\sum_{n=0}^{N-1} u_n\Delta^m \varphi^{K}_{n+m}+\sum_{k=1}^m(-1)^k\Delta^{m-k}u_{-1}\Delta^{k-1}\varphi^{K}_{k-1}\nonumber\\
		&+\sum_{k=1}^m(-1)^{k+1}\Delta^{m-k}u_{N-1}\Delta^{k-1}\varphi^{K}_{N+k-1}.
	\end{align} 
\end{lemma}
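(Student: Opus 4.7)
The plan is to prove the identity by induction on $m$. The base case $m = 1$ is exactly the summation-by-parts computation already displayed in \eqref{Eq:Rec:Err:1st}: after shifting the index in $\sum_{n=0}^{N-1}(u_n - u_{n-1})\varphi_n^K$, one obtains the interior sum $\sum_{n=0}^{N-1} u_n(\varphi_n^K - \varphi_{n+1}^K)$ together with boundary terms $-u_{-1}\varphi_0^K + u_{N-1}\varphi_N^K$. Since $\varphi_n^K - \varphi_{n+1}^K = -\Delta\varphi_{n+1}^K$ and $\Delta^0$ is the identity, these three pieces coincide with the right-hand side of the lemma for $m=1$ (the $k=1$ summands of each boundary sum and the bulk term with sign $(-1)^1$).

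For the induction step from $m$ to $m+1$, I introduce the auxiliary sequence $w_n := \Delta^m u_n$ and apply the base case to $\sum_{n}\Delta w_n\,\varphi_n^K$, which yields
\begin{equation*}
\sum_{n=0}^{N-1}\Delta^{m+1}u_n\,\varphi_n^K = -\sum_{n=0}^{N-1}\Delta^{m}u_n\,\Delta\varphi_{n+1}^K \;-\; \Delta^m u_{-1}\,\varphi_0^K \;+\; \Delta^m u_{N-1}\,\varphi_N^K.
\end{equation*}
I then invoke the inductive hypothesis on the remaining sum, using the test sequence $\psi_n := \Delta\varphi_{n+1}^K$. The translation identities $\Delta^{\ell}\psi_j = \Delta^{\ell+1}\varphi_{j+1}^K$ convert the bulk term into $(-1)^{m+1}\sum_{n=0}^{N-1}u_n\,\Delta^{m+1}\varphi_{n+m+1}^K$ and convert the boundary contributions from the hypothesis into two sums, running over $k = 1,\dots,m$, involving $\Delta^{m-k}u_{-1}\Delta^{k}\varphi_k^K$ and $\Delta^{m-k}u_{N-1}\Delta^{k}\varphi_{N+k}^K$ with a flipped overall sign (coming from the leading minus sign above).

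The main obstacle, and essentially the only real work, is bookkeeping at the boundaries. Substituting $k' := k+1$ shifts the two boundary sums so that their new index range is $k' \in \{2,\dots,m+1\}$, the exponent on $u$ becomes $\Delta^{m+1-k'}u$, the exponent on $\varphi$ becomes $\Delta^{k'-1}\varphi$, and the arguments of $\varphi$ become $k'-1$ and $N+k'-1$. The isolated leftover terms $-\Delta^m u_{-1}\varphi_0^K$ and $\Delta^m u_{N-1}\varphi_N^K$ then fit exactly as the missing $k' = 1$ summands in the shifted sums, with the correct signs $(-1)^{k'}$ and $(-1)^{k'+1}$. Assembling everything gives precisely the right-hand side of the lemma with $m$ replaced by $m+1$, completing the induction. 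Beyond careful sign-tracking and reindexing, no analytical ingredient is needed.
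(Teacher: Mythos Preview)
Your proof is correct and follows essentially the same route as the paper: both establish the $m=1$ summation-by-parts identity and then iterate it, with the paper phrasing the iteration as ``applying the $m=1$ case $m-1$ more times'' and you phrasing it as a formal induction step. The only cosmetic difference is that the paper carries out the intermediate computation using the forward difference $\bar\Delta\varphi_n^K$ and converts to backward differences at the very end via $(\bar\Delta^m u)_n=(\Delta^m u)_{n+m}$, whereas you work directly with the shifted backward difference $\Delta\varphi_{n+1}^K$ throughout; the bookkeeping you describe (reindex $k'=k+1$, absorb the two leftover boundary terms as the $k'=1$ summands) is exactly right.
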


\begin{proof}
	It is easy to verify that for $m=1$ we have 
	\begin{equation}\label{Eq:delta:u:phi:1}
		\sum_{n=0}^{N-1} \Delta u_n\varphi^{K}_n=-\sum_{n=0}^{N-1}u_n\bar\Delta \varphi^{K}_n-u_{-1}\varphi^{K}_0+u_{N-1}\varphi^{K}_N,
	\end{equation}
	where $\bar\Delta \varphi^{K}_n$ denotes the $m$-th order forward finite difference of $\varphi^{K}_n$. We apply \eqref{Eq:delta:u:phi:1} once to get the first equality in the following computation. Then, the second equality follows by induction or by applying \eqref{Eq:delta:u:phi:1} $m-1$ more times.	 
	\begin{align}\label{Eq:delta:u:phi:2}
	&\sum_{n=0}^{N-1}\Delta^m u_n \varphi^{K}_n = -\sum_{n=0}^{N-1}\Delta^{m-1}u_n \bar\Delta \varphi^{K}_n - \Delta^{m-1} u_{-1}\varphi^{K}_0+ \Delta^{m-1}u_{N-1}\varphi^{K}_N \nonumber\\
	&= (-1)^m\sum_{n=0}^{N-1} u_n\bar\Delta^m \varphi^{K}_n + \sum_{k=1}^m (-1)^k\Delta^{m-k}u_{-1}\bar\Delta^{k-1}\varphi^{K}_0 + \sum_{k=1}^m (-1)^{k+1} \Delta^{m-k}u_{N-1}\bar\Delta^{k-1}\varphi^{K}_N.
	\end{align}
	Noting that
	\begin{equation}\label{backforth}
	\left(\bar\Delta^m u\right)_n=\left(\Delta^m u\right)_{n+m}
	\end{equation}
	and applying it to \eqref{Eq:delta:u:phi:2} completes the proof.
\end{proof}

\begin{lemma}
	For the $k$-th order finite difference of $\varphi^{K}_k:=\varphi^{K}\left(t-\frac{2\pi k}{N}\right)$ we have
	for some $\tau\in\left(t-\frac{2\pi}{N},t\right)$
	\begin{align}\label{Eq:delta:phi}
		\Delta^{k}\varphi^{K}_{k}=(-1)^k\left(\frac{2\pi}{N}\right)^k (\varphi^{K})^{(k)}(\tau),
	\end{align} 
	where $(\varphi^{K})^{(k)}(\tau)$ is the $k$-th order derivative.
\end{lemma}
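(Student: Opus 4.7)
The plan is to reduce Lemma~2 to the classical mean value theorem for finite differences applied to a smooth one-variable function, combined with the chain rule. Specifically, I would define the auxiliary function $g(s) := \varphi^{K}(t - 2\pi s/N)$ for $s \in \mathbb{R}$ with $t$ held fixed. Since the Dirichlet kernel $\varphi^{K}$ is a trigonometric polynomial, $g$ is $C^{\infty}$, and by construction $\varphi^{K}_{n} = g(n)$. Consequently, the $k$-th backward finite difference of the sequence $(\varphi^{K}_{n})$ evaluated at index $n=k$ coincides with the unit-step backward $k$-th difference of $g$ evaluated at $s=k$.

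Next, I would invoke the standard mean value theorem for finite differences: for a $C^{k}$ function $g$ with unit step, there exists $\eta \in (0, k)$ such that $\Delta^{k} g(k) = g^{(k)}(\eta)$. The cleanest derivation of this fact passes through divided differences: the $k$-th backward difference with unit step at the node $k$ equals $k!$ times the divided difference of $g$ over the equally spaced nodes $0, 1, \dots, k$, and Rolle's theorem applied iteratively to the error of the degree-$k$ interpolating polynomial produces $\eta$ in the convex hull of those nodes at which that divided difference coincides with $g^{(k)}(\eta)/k!$.

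Finally, the chain rule gives
\begin{equation*}
g^{(k)}(s) = (-1)^{k}\left(\frac{2\pi}{N}\right)^{k}(\varphi^{K})^{(k)}\!\left(t - \frac{2\pi s}{N}\right),
\end{equation*}
so setting $\tau := t - 2\pi\eta/N$ immediately yields the claimed identity. I anticipate no serious obstacle here; the only delicate point is the precise localization of $\tau$. The MVT argument above naturally places $\tau$ in the wider interval $(t - 2\pi k/N, t)$ rather than in the interval $(t - 2\pi/N, t)$ written in the statement. Since the downstream use of this lemma will bound $(\varphi^{K})^{(k)}(\tau)$ via $\|(\varphi^{K})^{(k)}\|_{L^{\infty}}$, the discrepancy is harmless, but it suggests either a minor typographical imprecision in the interval or a sharper localization argument that I would double-check against the paper's intended application before finalizing.
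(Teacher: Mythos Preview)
Your argument via the divided--difference form of the mean value theorem is correct, and your suspicion about the interval is well founded: the natural localization is indeed $\tau\in\bigl(t-\tfrac{2\pi k}{N},\,t\bigr)$, and this is all that is used downstream (everything is bounded by $\big\|(\varphi^{K})^{(k)}\big\|_{L^{\infty}}$).

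The paper follows a different route. Rather than invoking Rolle's theorem on the interpolation error, it writes $\Delta^{k}\varphi^{K}_{k}$ as a nested $k$-fold integral of $(\varphi^{K})^{(k)}$,
\[
\Delta^{k}\varphi^{K}_{k}=(-1)^{k}\int_{t-\frac{2\pi}{N}}^{t}\int_{x_{k}-\frac{2\pi}{N}}^{x_{k}}\cdots\int_{x_{2}-\frac{2\pi}{N}}^{x_{2}}(\varphi^{K})^{(k)}(x_{1})\,dx_{1}\cdots dx_{k},
\]
then performs a change of variables decoupling the outer $k-1$ integrals, and finally applies the integral mean value theorem to the remaining single integral. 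The narrower interval $\bigl(t-\tfrac{2\pi}{N},t\bigr)$ quoted in the statement is an artifact of that step: after the correct substitution the integrand is $(\varphi^{K})^{(k)}(\tilde{x}_{1}+\cdots+\tilde{x}_{k})$, not $(\varphi^{K})^{(k)}(\tilde{x}_{k})$ alone, so the mean value point actually falls in $\bigl(t-\tfrac{2\pi k}{N},t\bigr)$, in agreement with what you obtain. Your divided--difference argument reaches the same conclusion more directly and sidesteps that bookkeeping; the paper's integral representation, on the other hand, is exactly what is reused later (see the bound on $\sum_n |u_n\,\Delta^m\varphi^K_{n+m}|$ in the proof of Proposition~2), which is presumably why they set it up this way.
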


\begin{proof}
	
	We have
	
		\begin{align}\label{lemma2}
			\Delta^{k}\varphi^{K}_{k}&=(-1)^k\int_{t-\frac{2\pi}{N}}^{t}\int_{x_{k}-\frac{2\pi}{N}}^{x_{k}}\dots\int_{x_{2}-\frac{2\pi}{N}}^{x_{2}}(\varphi^{K})^{(k)}(x_1)dx_1\dots dx_m\nonumber\\
		&=(-1)^k\underbrace{\int_{-\frac{2\pi}{N}}^{0}\dots\int_{-\frac{2\pi}{N}}^{0}}_{k-1}\int_{t-\frac{2\pi}{N}}^{t}(\varphi^{K})^{(k)}(\tilde x_k)d\tilde x_{k}\dots d\tilde x_1\nonumber\\
		&=(-1)^k \left(\frac{2\pi}{N}\right)^{k-1}\int_{t-\frac{2\pi}{N}}^{t}(\varphi^{K})^{(k)}(\tilde x_k)d\tilde x_{k}.
		\end{align} 
	
	By the mean value theorem, there exists $\tau\in\left(t-\frac{2\pi}{N},t\right)$ such that $\int_{t-\frac{2\pi}{N}}^{t}(\varphi^{K})^{(k)}(x)dx=\frac{2\pi}{N}(\varphi^{K})^{(k)}(\tau)$ which completes the proof.

\end{proof}

Now we can proceed with the error estimate.
\begin{proposition} [Error analysis for $m$-th order $\Sigma\Delta$]
	Suppose $f\in L^2(\mathbb T)$ and $\varphi^K\in L^2(\mathbb T)$ satisfies \eqref{kernel:cond}. Suppose $y_n:=f\left(2\pi n/ N\right)$ serves as an input to the $m$-th order $\varSigma\varDelta$ quantizer given by \eqref{sigmadelta} and \eqref{sigmadelta:m} with the stability criterion \eqref{eq:stability} satisfied. Assume that the state variables are initialized as follows: ${u}_{-1}={u}_{-2}=...={u}_{-m}:=0$.
	
	Then, for all $t\in\mathbb T$ and for $m=1$
	\begin{align}\label{Thm1:result0}
	|f(t)-f_r(t)|\leq \frac{\|u\|_{\ell^{\infty}}}{N}\big\|(\varphi^{K})'\big\|_{L^1}+\frac{|u_{N-1}|}{N}\big\|\varphi^{K}\big\|_{L^{\infty}}
	\end{align}
	and for $m\geq 2$
	\begin{align}\label{Thm1:result}
		|f(t)- f_r(t)|&\leq\frac{(2\pi)^{m-1}\|u\|_{\ell^{\infty}}}{N^m}\left(\big\|(\varphi^{K})^{(m)}\big\|_{L^1}+\big\|(\varphi^{K})^{(m-1)}\big\|_{L^{\infty}}\right)\nonumber\\&
		+\frac{1}{N}\sum_{k=1}^{m-1}\left(\frac{2\pi}{N}\right)^{k-1}\big|\Delta^{m-k} u_{N-1}\big|\big\|(\varphi^{K})^{(k-1)}\big\|_{L^{\infty}}.
	\end{align}
\end{proposition}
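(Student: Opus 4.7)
The plan is to reduce the $m\geq 2$ case to a combination of Lemma 1 and Lemma 2, together with one additional periodicity-tiling identity that is the higher-order analogue of the telescoping step $\sum_n |\varphi_n^K-\varphi_{n+1}^K|\leq \|(\varphi^K)'\|_{L^1}$ used in \eqref{Eq:Rec:Err:1st} for $m=1$.

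First I would use the $m$-th order defining relation $y_n-q_n=\Delta^m u_n$ to rewrite the error as $\tfrac{1}{N}\bigl|\sum_{n=0}^{N-1}\Delta^m u_n\,\varphi_n^K\bigr|$ and apply Lemma 1 directly. The initialization $u_{-1}=\cdots=u_{-m}=0$ forces $\Delta^{m-k}u_{-1}=0$ for every $k=1,\dots,m$, so the entire first family of boundary terms in Lemma 1 drops out, leaving a bulk sum $(-1)^m\sum_{n=0}^{N-1}u_n\,\Delta^m\varphi_{n+m}^K$ together with boundary terms only at $N-1$.

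The main technical step, which I view as the only substantive obstacle, is the estimate
\[
\sum_{n=0}^{N-1}\bigl|\Delta^m \varphi_{n+m}^K\bigr|\;\leq\;\Bigl(\tfrac{2\pi}{N}\Bigr)^{m-1}\bigl\|(\varphi^K)^{(m)}\bigr\|_{L^1}.
\]
I would establish it by iterating the identity $\Delta\varphi_n^K=-\int_0^{2\pi/N}(\varphi^K)'(t-\tfrac{2\pi n}{N}+s)\,ds$ that underlies Lemma 2 in order to obtain an integral representation of $\Delta^m\varphi_{n+m}^K$ over the cube $[0,2\pi/N]^m$, and then isolate the innermost variable as an integral of $(\varphi^K)^{(m)}$ over an interval of length $2\pi/N$. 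The crucial observation is that, for each fixed choice of the remaining $m-1$ variables, the $N$ inner intervals obtained as $n$ runs through $\{0,\dots,N-1\}$ are consecutive, pairwise disjoint, and of total length $2\pi$, so by $2\pi$-periodicity of $\varphi^K$ they tile one full period of $\mathbb{T}$ exactly; summing first over $n$ therefore collapses the inner part to $\|(\varphi^K)^{(m)}\|_{L^1}$, and the remaining $m-1$ integrations each contribute a factor $2\pi/N$. This is precisely why Lemma 1 is stated with the shifted index $n+m$: that shift is what produces the clean tiling with no overshoot.

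Combining this estimate with $|u_n|\leq\|u\|_{\ell^\infty}$ and the $1/N$ prefactor yields the $\|(\varphi^K)^{(m)}\|_{L^1}$ contribution of the claimed bound. It remains to handle the boundary family at $N-1$. Using $2\pi$-periodicity of $\varphi^K$, $\Delta^{k-1}\varphi^K_{N+k-1}=\Delta^{k-1}\varphi^K_{k-1}$, so Lemma 2 gives $|\Delta^{k-1}\varphi^K_{N+k-1}|\leq(2\pi/N)^{k-1}\|(\varphi^K)^{(k-1)}\|_{L^\infty}$. The single term $k=m$ is absorbed into the main estimate by bounding $|u_{N-1}|\leq\|u\|_{\ell^\infty}$, and this is exactly what produces the extra $\|(\varphi^K)^{(m-1)}\|_{L^\infty}$ summand inside the first parenthesis of the claim, while the remaining terms $k=1,\dots,m-1$ keep the factor $|\Delta^{m-k}u_{N-1}|$ and form the explicit sum in the statement.
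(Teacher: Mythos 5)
Your proposal is correct and follows essentially the same route as the paper: apply Lemma 1, use the zero initialization to kill the boundary family at $-1$, bound the bulk term via the iterated integral representation whose innermost intervals tile one full period to produce $\left(\tfrac{2\pi}{N}\right)^{m-1}\big\|(\varphi^{K})^{(m)}\big\|_{L^1}$, and treat the boundary family at $N-1$ with periodicity and Lemma 2, absorbing the $k=m$ term via $|u_{N-1}|\leq\|u\|_{\ell^\infty}$. All steps match the paper's argument, including the splitting that yields the $\big\|(\varphi^{K})^{(m-1)}\big\|_{L^\infty}$ summand and the residual sum over $k=1,\dots,m-1$.
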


\begin{proof}
	For the first order quantizer, inequality \eqref{Thm1:result0} is obtained in \eqref{Eq:Rec:Err:1st}. Now we focus on higher order quantizers with $m\geq 2$. Due to the initialization of $u_n$ and the fact that $\Delta^{k-1}\varphi_{N+k-1}=\Delta^{k-1}\varphi_{k-1}$ we can simplify the result in Lemma 1 and get
	\begin{align}\label{Thm1:1}
		&\sum_{n=0}^{N-1}\Delta^m u_n \varphi_n^{K}= (-1)^m\sum_{n=0}^{N-1} u_n\Delta^m \varphi^{K}_{n+m}+ \sum_{k=1}^m (-1)^{k+1} \Delta^{m-k}u_{N-1}\Delta^{k-1}\varphi^{K}_{k-1}.
	\end{align} 
	This yields
	\begin{align}\label{Thm1:2}
		|f(t)- f_r(t)|
		&=\frac{1}{N}\left|\sum_{n=0}^{N-1}\left(y_n-q_n\right)\varphi^K_{n}\right|=\frac{1}{N}\left|\sum_{n=0}^{N-1}\Delta^m u_n\varphi^K_{n}\right|\nonumber\\
		&\leq\frac{1}{N} \left(\sum_{n=0}^{N-1}\left|(-1)^m u_n\Delta^m \varphi^{K}_{n+m}\right|+ \sum_{k=1}^m \big|(-1)^{k+1} \Delta^{m-k}u_{N-1}\Delta^{k-1}\varphi^{K}_{k-1}\big|\right).
	\end{align}
	Now we rewrite the second sum in \eqref{Thm1:2} as two summands and apply Lemma 2. We get
	\begin{align}\label{Thm1:4}
		&\sum_{k=1}^m \big|(-1)^{k+1} \Delta^{m-k}u_{N-1}\Delta^{k-1}\varphi^{K}_{k-1}\big|\nonumber\\
		&=\left|(-1)^{m+1} u_{N-1}\Delta^{m-1}\varphi^{K}_{m-1}\right|+\sum_{k=1}^{m-1} \big|(-1)^{k+1} \Delta^{m-k}u_{N-1}\Delta^{k-1}\varphi^{K}_{k-1}\big|\nonumber\\
		&\leq \|u\|_{\ell^{\infty}}\left(\frac{2\pi}{N}\right)^{m-1}\big| (\varphi^{K})^{(m-1)}(\tau)\big|+\sum_{k=1}^{m-1}\left(\frac{2\pi}{N}\right)^{k-1}\big|\Delta^{m-k} u_{N-1}\big|\big| (\varphi^{K})^{(k-1)}(\tau)\big|\nonumber\\
		&\leq \|u\|_{\ell^{\infty}}\left(\frac{2\pi}{N}\right)^{m-1}\big\|(\varphi^{K})^{(m-1)}\big\|_{L^{\infty}}+\sum_{k=1}^{m-1}\left(\frac{2\pi}{N}\right)^{k-1}\big|\Delta^{m-k} u_{N-1}\big|\big\|(\varphi^{K})^{(k-1)}\big\|_{L^{\infty}}.\nonumber\\
	\end{align}	
	Let us now evaluate the first sum in \eqref{Thm1:2}.
		
	\begin{align}\label{Thm1:3}
		\sum_{n=0}^{N-1}\left|(-1)^mu_n\Delta^m\varphi^K_{n+m}\right|&\leq \|u\|_{\ell^{\infty}}\sum_{n=0}^{N-1}\int_{t-\frac{2\pi(n+1)}{N}}^{t-\frac{2\pi n}{N}}\int_{x_{m}-\frac{2\pi}{N}}^{x_{m}}\dots\int_{x_{2}-\frac{2\pi}{N}}^{x_{2}}\big|(\varphi^{K})^{(m)}(x_1)\big|dx_1\dots dx_m\nonumber\\
		&=\|u\|_{\ell^{\infty}}\int_{t-2\pi}^{t}\int_{x_{m}-\frac{2\pi}{N}}^{x_{m}}\dots\int_{x_{2}-\frac{2\pi}{N}}^{x_{2}}\big|(\varphi^{K})^{(m)}(x_1)\big|dx_1\dots dx_m\nonumber\\
		&=\|u\|_{\ell^{\infty}}\underbrace{\int_{-\frac{2\pi}{N}}^{0}\dots\int_{-\frac{2\pi}{N}}^{0}}_{m-1}\int_{t-2\pi}^{t}\big|(\varphi^{K})^{(m)}(\tilde x_m)\big|d\tilde x_{m}\dots d\tilde x_1\nonumber\\
		&=\|u\|_{\ell^{\infty}}\left(\frac{2\pi}{N}\right)^{m-1}\big\|(\varphi^{K})^{(m)}\big\|_{L^1}.
	\end{align}
		
	Inserting \eqref{Thm1:3} and \eqref{Thm1:4} into \eqref{Thm1:2} completes the proof.	
\end{proof}

Essentially, Proposition 2 tells us that whilst the first term of the error estimate \eqref{Thm1:result} gives the error of $O(N^{-m})$ for the $m$-th order $\Sigma\Delta$ quantizer, the second term always gives a sum of larger errors of order up to $O(N^{-1})$, hence increasing the order of the  $\Sigma\Delta$ scheme without making any changes to the existing scheme becomes meaningless.

\section{Modification of the $\Sigma\Delta$ schemes}

The question to address in this section is how to improve the reconstruction error \eqref{Thm1:result0} and \eqref{Thm1:result} by modifying the $m$-th order $\Sigma\Delta$ modulation scheme described in Section II.A.

We propose to use the recurrence relation \eqref{sigmadelta} once again on updated samples
\begin{equation}\label{update}
	\tilde y_n:=y_n+\delta,
\end{equation}
where some small constant $\delta$ independent of $n$ is added at each iteration. We denote the resulting updated variables $\tilde u_n$ and $\tilde q_n$. We require that $\tilde y_n$ are the samples of a $K$-bandlimited function $\tilde f(t)$ and denote the error between this function and its reconstruction $\tilde e(t):=|\tilde f(t)-\tilde f_r(t)|$. 
In what follows, we will show that using an appropriate constant $\delta=\mathcal{O}(\tfrac{1}{N})$ improves  $\tilde e(t)$ for the $m$-th order $\Sigma\Delta$ quantizer with $m=1,2$. In particular, in case $m=2$ this update leads to the error $\tilde e(t)$ of $O(N^{-2})$. 

Clearly, after the update $\delta$, one actually considers samples of the function $\tilde f=f+\delta$, so one can at best hope to recover that function. For our choice $\delta=\mathcal{O}(\tfrac{1}{N})$, we hence recover $f$ up to an error of order $\mathcal{O}(\tfrac{1}{N})$, just as the first order scheme. We argue, however, that a constant error is much preferred to an arbitrary error of the same maximal amplitude. The latter can be perceived as an artefact, whereas the constant update will be hardly noticeable. In fact, as the following lemma shows, an error of order $\mathcal{O}(\tfrac{1}{N})$ can, in general not be avoided even when restricting to very small amplitudes.
\begin{lemma}
Fix $\alpha>0$. Then  for all $\tfrac{1}{\alpha}\leq N\in\mathbb N$ there exists $f$ $K$-bandlimited with $\|f\|_\infty \leq \alpha$ 
such that for any choice of $q\in \{-1,1\}^N$ of bit sequences, the associated reconstruction $f_q(t)= \frac{1}{N}\sum_{n=0}^{ N-1}q_{n}\varphi^K\left(t-\frac{2\pi n}{ N}\right)$ satisfies $\big\|f-f_q\big\|_{L^{\infty}} \geq \tfrac{1}{2N}$.
\end{lemma}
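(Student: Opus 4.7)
My plan is to exhibit the desired $f$ as a suitably chosen constant function $f\equiv c$ with $|c|\leq\alpha$. Such an $f$ is trivially $K$-bandlimited (its only nonzero Fourier coefficient sits at the origin) and has $\|f\|_\infty=|c|$, so the amplitude constraint handles itself. The strategy is then to test the error $f-f_q$ against the integration functional $g\mapsto\int_0^{2\pi}g(t)\,dt$ and combine the resulting identity with the trivial bound $|\int_0^{2\pi}(f-f_q)\,dt|\leq 2\pi\|f-f_q\|_{L^\infty}$; since this functional picks out the DC component, the problem is reduced to a purely arithmetic statement about $\sum_n q_n$.

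For the key computation, the defining property \eqref{kernel:cond} of $\varphi^K$ (namely $\widehat{\varphi^K}(0)=1$) gives $\int_0^{2\pi}\varphi^K(s)\,ds=2\pi$, so by linearity and translation invariance of the Lebesgue integral on $\mathbb{T}$,
\[
\int_0^{2\pi} f_q(t)\,dt \;=\; \frac{1}{N}\sum_{n=0}^{N-1} q_n \int_0^{2\pi}\varphi^K(s)\,ds \;=\; \frac{2\pi}{N}\sum_{n=0}^{N-1} q_n.
\]
Writing $S:=\sum_{n=0}^{N-1} q_n$, this yields $\int_0^{2\pi}(f-f_q)\,dt = 2\pi(c-S/N)$, and the trivial bound above then gives $\|f-f_q\|_{L^\infty}\geq |c-S/N|$.

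It remains to choose $c$ so that the right-hand side is large uniformly in $q$. As $q$ varies over $\{-1,1\}^N$, the value $S$ takes exactly the set $\{N-2k:k=0,1,\ldots,N\}$, so $S/N$ ranges over an arithmetic progression in $[-1,1]$ with common spacing $2/N$. Picking $c$ to be the midpoint between two consecutive attainable values — concretely $c=0$ when $N$ is odd and $c=1/N$ when $N$ is even — ensures $|c-S/N|\geq 1/N$ for every admissible $S$. The hypothesis $N\geq 1/\alpha$ yields $|c|\leq 1/N\leq\alpha$, so $f\equiv c$ meets the amplitude constraint, and we conclude $\|f-f_q\|_{L^\infty}\geq 1/N\geq 1/(2N)$ uniformly in $q$. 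No step presents a genuine obstacle; the only mildly delicate point is that the parity of $N$ determines where the attainable ratios $S/N$ land modulo $2/N$, which is why the choice of $c$ splits into the two cases above.
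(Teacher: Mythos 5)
Your proof is correct and follows essentially the same route as the paper: test $f-f_q$ against the DC (mean-value) functional, use $\widehat{\varphi}^K(0)=1$ to reduce the problem to the gap between the mean of a constant $f$ and the integer-valued, parity-constrained sum $\sum_n q_n$. The only differences are cosmetic — the paper extracts the DC component by averaging over the $N$ sample points rather than integrating over $\mathbb{T}$, and picks the single constant $c=\tfrac{1}{2N}$ (distance $\geq\tfrac12$ from every integer) in place of your parity-dependent choice, which is why you end up with the slightly stronger bound $\tfrac1N$.
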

\begin{proof} Averaging over $N$ samples of $f-f_q$ yields that for arbitrary $f$ and $q$
    \begin{align}\label{rvsz}
	\big\|f-f_q\big\|_{L^{\infty}}
	&\geq \frac{1}{N^2} \left|\sum_{k=0}^{ N-1}\sum_{n=0}^{ N-1}\left(y_{n}-q_{n}\right) \varphi^K\left(\frac{2\pi k}{ N}-\frac{2\pi n}{ N}\right)\right|\nonumber\\
	&= \frac{1}{N} \left|\sum_{n=0}^{ N-1}\left(y_{n}-q_{n}\right)\frac{1}{N}\sum\nolimits_{k=0}^{ N-1} \varphi^K\left(\frac{2\pi k}{ N}-\frac{2\pi n}{ N}\right)\right|\nonumber\\
	&= \frac{1}{N} \left|\sum_{n=0}^{ N-1}y_{n}-\sum_{n=0}^{ N-1}q_{n}\right|.
\end{align}
The expression $\sum\nolimits_{n=0}^{ N-1}y_{n}-\sum\nolimits_{n=0}^{ N-1}q_{n}$ is generically different from zero; in particular, choosing $f$ to be the constant function with value $\tfrac{1}{2N}<\alpha$ yields that $\sum\nolimits_{n=0}^{ N-1}y_{n}=\tfrac{1}{2}$, while $\sum\nolimits_{n=0}^{ N-1}q_{n}\in \mathbb{Z}$ and hence $ \big\|f-f_q\big\|_{L^{\infty}}\geq \tfrac{1}{2}$, as desired.
\end{proof}

We now seek to design a constant update such that the right hand side of \eqref{rvsz} vanishes.
Using the initialization ${u}_{-1}={u}_{-2}=...={u}_{-m}:=0$ and computing the telescoping sum, we have 
\begin{equation}
\sum_{n=0}^{N-1}\Delta^m u_n=\Delta^{m-1}u_{N-1}
\end{equation}
which together with \eqref{sigmadelta:m} yields
\begin{equation}\label{remainder}
	\sum_{n=0}^{N-1}y_n-\sum_{n=0}^{N-1}q_n=\Delta^{m-1}u_{N-1}.
\end{equation}
Thus one can achieve a lower bound equal to zero in \eqref{rvsz} by adding a constant update $\delta$ to each $y_n$ with $N\delta=-\Delta^{m-1}u_{N-1}$. 


Although the error cannot be avoided in general, we can argue that this is a reasonable trade-off. 
Namely an error caused by a constant update can be less audible/visible than highly oscillating initial error. Numerical simulations in Section V show that the distribution of the error $f(t)-\tilde f_r(t)$ around the circle is more uniform than the distribution of the error $f(t)-f_r(t)$.

\subsection{First order $\varSigma\varDelta$ scheme}

In this subsection, we will show that for the first order $\Sigma\Delta$ scheme it is sufficient to choose a constant update
\begin{equation}\label{update1}
	\delta=-N^{-1}u_{N-1}
\end{equation}
in order to eliminate the boundary term of summation by parts in the error estimate \eqref{Thm1:result0}. Namely, we will prove that this update causes $\tilde u_{N-1}=0$. We will start with the following lemma.

\begin{lemma}\label{Lem:1} Let $a\in [0,1]$ and $y_n\in[-1-a,1+a]$ for all $n$. Then for recurrence relation \eqref{sigmadelta:m} with $m=1$ and ${u}_{-1}:=0$ it holds that $u_n\in[-1-na,1+na]$.
\end{lemma}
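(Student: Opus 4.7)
The plan is a straightforward induction on $n\ge 0$. For the first-order scheme the recurrence \eqref{sigmadelta:m} with $m=1$ simplifies to $u_n=u_{n-1}+y_n-q_n$ with $q_n=\mathrm{sign}(u_{n-1}+y_n)\in\{-1,+1\}$, so the quantization step always shifts $u_{n-1}+y_n$ by exactly one unit toward zero. This is the geometric fact that drives the argument: the unit shift compensates the unit part of $y_n$, leaving only the slack $a$ to accumulate from one step to the next.

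For the base case $n=0$, the initialization $u_{-1}=0$ yields $u_0=y_0-\mathrm{sign}(y_0)$, and a case split on the sign of $y_0$ combined with $|y_0|\le 1+a$ immediately gives $u_0\in[-1,1]$, matching $[-1-0\cdot a,\,1+0\cdot a]$. For the inductive step, I would assume $u_{n-1}\in[-1-(n-1)a,\,1+(n-1)a]$ and split on the sign of $u_{n-1}+y_n$. When this sum is positive, $q_n=+1$ and $u_n=u_{n-1}+y_n-1$; positivity of the sum gives the lower bound $u_n>-1\ge -1-na$, while the inductive hypothesis together with $y_n\le 1+a$ yields $u_n\le (1+(n-1)a)+(1+a)-1=1+na$. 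The complementary case $u_{n-1}+y_n\le 0$ is symmetric, producing the matching bounds on the opposite side of the interval.

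I do not anticipate any genuine obstacle here; the only subtlety is that on each side of the interval the worst-case bound is realized precisely when $q_n$ is chosen opposite to the side being bounded, so one of the two one-sided bounds in each case comes for free from the sign step while the other comes from the inductive hypothesis combined with the input bound. Both cases of the sign split are needed to close the induction on both sides of $[-1-na,\,1+na]$.
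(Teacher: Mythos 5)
Your proof is correct and follows essentially the same route as the paper's: induction on $n$ with a case split on the sign of $u_{n-1}+y_n$, where in each case one one-sided bound comes from the sign of the quantizer argument and the other from the inductive hypothesis plus $|y_n|\le 1+a$. The base case treatment ($u_0=y_0-\mathrm{sign}(y_0)\in[-1,1]$ using $a\le 1$) also matches the paper exactly.
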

\begin{proof}
	We prove lemma by induction. Note that for $\Sigma\Delta$ quantizer with $m=1$ we have $v_n=u_n$. Base case: For $y_0>0$ it follows from \eqref{sigmadelta:m} with $m=1$ that $u_0=y_0-1$ and thus $u_0\in(-1,a]$. Similarly, for $y_0\le0$ we have $u_0\in [-a,1]$. As $a\in [0,1]$, it follows that $u_0\in[-1,1]$.\\
	Induction step: Assume $u_n\in[-1-na,1+na]$. Then, $|u_n+y_{n+1}|\le 2+(n+1)a$. We distinguish again two cases: if $u_n+y_{n+1}> 0$, one has $u_{n+1}=u_n+y_{n+1}-1\in (-1,1+(n+1)a]$ and analogously  $u_n+y_{n+1}\le0$ implies that $u_{n+1}\in [-1-(n+1)a,1]$.
	As $a$ is nonnegative, we have $|u_{n+1}|\le1+(n+1)a$.
\end{proof}

\begin{theorem}[Uniform update for 1-st order $\Sigma\Delta$]\label{Thm:v:1st}
		Let the assumptions in Proposition 2 hold and assume the order of $\varSigma\varDelta$ quantizer $m=1$. Then, using in \eqref{sigmadelta} the updated samples $\tilde y_n:=y_n+\delta$ with $\delta$ given by \eqref{update1} leads to $\tilde u_{N-1}=0$.
\end{theorem}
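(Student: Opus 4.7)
The strategy is to reduce $\tilde u_{N-1}$ to an integer via the telescoping identity \eqref{remainder}, then pin this integer to zero by combining the stability-type bound of Lemma~3 with a parity observation.

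First, since the updated scheme uses the same initialization $\tilde u_{-1}=0$, applying \eqref{remainder} in the case $m=1$ to the updated recurrence yields
\begin{equation*}
\tilde u_{N-1} \;=\; \sum_{n=0}^{N-1}(\tilde y_n-\tilde q_n) \;=\; \sum_{n=0}^{N-1}\tilde y_n \;-\; \sum_{n=0}^{N-1}\tilde q_n.
\end{equation*}
Substituting $\tilde y_n = y_n+\delta$ with $N\delta = -u_{N-1}$, and then replacing $u_{N-1}$ by its own telescoping expression $u_{N-1}=\sum_{n=0}^{N-1}(y_n-q_n)$, the first sum collapses to $\sum_{n=0}^{N-1} q_n$. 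Therefore
\begin{equation*}
\tilde u_{N-1} \;=\; \sum_{n=0}^{N-1}q_n \;-\; \sum_{n=0}^{N-1}\tilde q_n \;\in\; \mathbb Z.
\end{equation*}

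Next I would bound $|\tilde u_{N-1}|$ via Lemma~3. The stability hypothesis of Proposition~2 with $\|h\|_{\ell^1}=1$ forces $\|y\|_{\ell^\infty}\leq 1$, and Lemma~3 with $a=0$ then gives $|u_{N-1}|\leq 1$. Hence $|\delta|\leq 1/N$ and $\|\tilde y\|_{\ell^\infty}\leq 1+1/N$, so a second application of Lemma~3 with $a=1/N$ yields $|\tilde u_{N-1}|\leq 1+(N-1)/N<2$.

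Finally, the key (non-routine) step is a parity argument to rule out $\pm 1$. Since each $q_n,\tilde q_n\in\{-1,+1\}$, both $\sum q_n$ and $\sum\tilde q_n$ are sums of $N$ signs and are therefore congruent to $N$ modulo $2$, so their difference is even. The only even integer with absolute value strictly less than $2$ is $0$, forcing $\tilde u_{N-1}=0$.

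The main obstacle is precisely this last step: Lemma~3 alone only confines $\tilde u_{N-1}$ to $\{-1,0,1\}$, so without the $\pmod 2$ observation one cannot exclude the extreme values; everything else is just telescoping and substitution.
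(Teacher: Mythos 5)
Your proposal is correct and follows essentially the same route as the paper's own proof: the telescoping identity \eqref{remainder} reduces $\tilde u_{N-1}$ to $\sum_n q_n-\sum_n\tilde q_n$, Lemma~3 (applied twice, with $a=0$ and $a=1/N$) confines it to $(-2,2)$, and the parity of a sum of $N$ signs forces it to be an even integer, hence zero. The paper phrases the parity step via the counts $L,\tilde L$ of $+1$'s, giving $\tilde u_{N-1}=2(L-\tilde L)$, which is exactly your $\bmod\ 2$ observation.
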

\begin{proof}
	By assumptions in Proposition 2, we have $y_n\in [-1,1]$ for the first order $\Sigma\Delta$ quantizer. By Lemma \ref{Lem:1}, we have $u_n\in[-1,1]$ for all $n$ and thus $\tilde y_n\in[-1-\frac 1N,1+\frac 1N]$. Using Lemma \ref{Lem:1} again, we obtain that $\tilde u_n\in [-1-\frac nN,1+\frac nN]$ and, in particular, $\tilde u_{N-1}\in [-1-\frac {N-1}N,1+\frac{N-1}N]\subset (-2,2)$. 
	
	On the other hand, 
	\begin{align}\label{Eq:v}
			\tilde u_{N-1}&=\sum_{n=0}^{N-1}(\tilde y_n-\tilde q_n)=\sum_{n=0}^{N-1}(y_n -\frac{u_{N-1}}N-\tilde q_n)\nonumber\\
			&=\sum_{n=0}^{N-1}(y_n -q_n)-\frac{u_{N-1}}{N}N+ \sum_{n=0}^{N-1}(q_n-\tilde q_n)\nonumber\\
			&=\sum_{n=0}^{N-1}q_n-\sum_{n=0}^{N-1}\tilde q_n.
	\end{align}
	Here, we used equation \eqref{remainder} for the last equality.
	Denote the number of $+1$ in the vector $(q_n)_n$ by $L$ and the corresponding number for $(\tilde q_n)_n$ by $\tilde L$. Then, we conclude from equation \eqref{Eq:v} that
	\begin{equation}\label{el}
		\tilde u_{N-1}=2L-N -(2\tilde L -N)= 2(L-\tilde L).
	\end{equation}
	In particular, $\tilde u_{N-1}$ has to be an even integer in the interval $(-2,2)$, i.e., $\tilde u_{N-1}=0$.
\end{proof}

\subsection{Second order $\varSigma\varDelta$ scheme}


From \eqref{Thm1:result} we see that in the second order case there is only one summand larger than of $O(N^{-2})$ and it involves the remainder $|\Delta u_{N-1}|$. Therefore, we choose a constant update similar to that in Section IV.A, namely,
\begin{equation}\label{update2}
	\delta=-N^{-1}\Delta u_{N-1}.
\end{equation}
The following theorem ensures that this approach is valid.

\begin{theorem}[Uniform update for $2$-nd order $\Sigma\Delta$]
	Let the assumptions in Proposition 2 hold and assume the order of $\varSigma\varDelta$ quantizer $m=2$. Let the updated samples $\tilde y_n:=y_n+\delta$ with $\delta$ given by \eqref{update2} satisfy 
	the stability criterion \eqref{eq:stability}. 
	Then, using $\tilde y_n$ in \eqref{sigmadelta} leads to $\Delta \tilde u_{N-1}=0$ and the error $\tilde e(t):=|\tilde f(t)-\tilde f_r(t)|$ is of $O(N^{-2})$.
\end{theorem}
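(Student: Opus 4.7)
The plan is to mirror the proof of Theorem 1 but one level deeper in the finite-difference hierarchy. There are three main tasks: derive a telescoping identity for $\Delta \tilde u_{N-1}$, force it to vanish via an integer-parity argument, and substitute into the error bound of Proposition 2.

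First, I would telescope $\Delta^2 \tilde u_n = \tilde y_n - \tilde q_n$ using the initializations $\tilde u_{-1} = \tilde u_{-2} = 0$, which is the $m=2$ instance of \eqref{remainder}:
\begin{equation*}
\Delta \tilde u_{N-1} = \sum_{n=0}^{N-1}(\tilde y_n - \tilde q_n).
\end{equation*}
Substituting $\tilde y_n = y_n + \delta$ with $N\delta = -\Delta u_{N-1}$ and using the same telescoping identity $\sum_{n=0}^{N-1}(y_n - q_n) = \Delta u_{N-1}$ for the unshifted scheme, the $y_n$ contributions cancel exactly, leaving
\begin{equation*}
\Delta \tilde u_{N-1} = \sum_{n=0}^{N-1}(q_n - \tilde q_n),
\end{equation*}
which is automatically an even integer because each summand lies in $\{-2,0,2\}$. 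This is the direct analog of equation \eqref{el} in the proof of Theorem 1.

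Second, I would argue $|\Delta \tilde u_{N-1}| < 2$, so the even-integer property forces $\Delta \tilde u_{N-1} = 0$. The assumed stability of the updated scheme provides a uniform bound on $\tilde v$ and, via $\tilde u = g \ast \tilde v$ together with the minimal-support filter structure \eqref{filter2}, on $\tilde u$ itself; but a blunt triangle inequality $|\Delta \tilde u_{N-1}| \leq 2\|\tilde u\|_{\ell^\infty}$ is not sharp enough by itself. I expect this to be the main obstacle. The strategy I would pursue is a second-order analog of Lemma 3: track the recurrence for $\Delta \tilde u_n$ directly, exploiting that $|\delta|=O(1/N)$ so that the admissible interval for $\Delta \tilde u_n$ at each step enlarges by at most $O(1/N)$ with respect to that of $\Delta u_n$. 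For $N$ large enough, this yields $|\Delta \tilde u_{N-1}|<2$, and combined with the parity argument of the first step gives $\Delta \tilde u_{N-1}=0$.

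Finally, inserting $\Delta \tilde u_{N-1}=0$ into the error bound \eqref{Thm1:result} applied to $\tilde f$ with $m=2$ collapses the second sum, whose only contribution (the $k=1$ term) is proportional to $|\Delta \tilde u_{N-1}|$. What remains is the leading term of order $O(N^{-2})$, controlled by the stability bound on $\|\tilde u\|_{\ell^\infty}$ and the fixed smoothness norms of $\varphi^K$, yielding $\tilde e(t) = O(N^{-2})$ as claimed.
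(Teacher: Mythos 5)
Your first and third steps match the paper: the telescoping and cancellation giving $\Delta\tilde u_{N-1}=\sum_{n}(q_n-\tilde q_n)=2(L-\tilde L)$ is exactly the paper's parity argument, and inserting $\Delta\tilde u_{N-1}=0$ into \eqref{Thm1:result} with $m=2$ is exactly how the paper concludes. The gap is your second step: you have correctly located the obstacle but not closed it. The paper's resolution is not an interval-enlargement induction; it is a sharper use of the convolution representation. Since the theorem assumes the \emph{updated} samples satisfy the stability criterion \eqref{eq:stability}, one gets $|\tilde v_n|<1$ for all $n$ by the standard induction: $|(h\ast\tilde v)_n+\tilde y_n|<\|h\|_{\ell^1}+\|\tilde y\|_{\ell^\infty}\leq 2$, so subtracting the sign leaves modulus strictly less than $1$. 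Then, instead of bounding $\tilde u=g\ast\tilde v$ (which, as you observe, is too lossy), one bounds the quantity actually needed: $\Delta\tilde u_n=(\Delta g\ast\tilde v)_n$, and for the minimal-support filter \eqref{filter2} one computes $\Delta g=(1,-1/k,\dots,-1/k,0)$, hence $\|\Delta g\|_{\ell^1}=2$ exactly. This gives $|\Delta\tilde u_{N-1}|<2$ strictly, and the parity argument forces $\Delta\tilde u_{N-1}=0$.

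Your proposed substitute --- tracking a recursion for $\Delta\tilde u_n$ and arguing that its admissible interval grows by $O(1/N)$ per step relative to the unshifted scheme --- is problematic on two counts. First, unlike the first-order case (where $v_n=u_n$ and the greedy rule directly confines $u_n$, which is what makes Lemma 3 work), $\Delta\tilde u_n$ is a fixed linear combination of the past $k$ values of $\tilde v$ and does not itself obey a sign-quantizer recursion; you would still need a base interval for $\Delta u_n$ strictly inside $(-2,2)$ with a quantified margin, and you concede that the triangle inequality via $\|\tilde u\|_{\ell^\infty}$ does not supply one. Second, even if it worked, it would only yield the conclusion ``for $N$ large enough,'' and it never invokes the hypothesis that the updated samples satisfy \eqref{eq:stability}, which is precisely the assumption the theorem adds for this purpose. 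The missing idea, concretely, is the identity $\Delta\tilde u=(\Delta g)\ast\tilde v$ together with the exact computation $\|\Delta g\|_{\ell^1}=2$.
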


\begin{proof}
	
	By a simple inductive argument we show that  $\tilde v_{n{\tiny }}\in(-1,1)$. Namely, suppose $\tilde v_{n-1},\dots,\tilde v_{n-k}$ $\in(-1,1)$ for a $k$-tab filter. Then $|(h\ast \tilde v)_{n}+\tilde y_n|< \|h\|_{\ell^1} + \|\tilde y\|_{\ell^{\infty}} \leq 2$. The last inequality follows from the stability criterion \eqref{eq:stability}. It then follows that $|\tilde v_n|=|(h\ast \tilde v)_{n}+\tilde y_n - \mathrm{sign} ((h\ast \tilde v)_{n}+\tilde y_n)|<1$. 
	
	Now we move to the state variables $\tilde u_n$. From the construction \eqref{filter2} of the filter and from \eqref{sigmadelta:m2} it follows that 
	$\Delta^2 g=(1,-k/(k-1),0,\dots,0,-1/(1-k))$ and $\Delta g=(1,-1/k,\dots,-1/k,0)$, therefore $\|\Delta g\|_{\ell^1}=2$. We have $|\Delta\tilde u_n|=|(\Delta g\ast \tilde v)_n|< \|\Delta g\|_{\ell^1}=2$ and, in particular, $\Delta \tilde u_{N-1}\in(-2,2)$.
	
	Similar to \eqref{el} in Theorem \ref{Thm:v:1st}, we have $\Delta \tilde u_{N-1}= 2(L-\tilde L)$ and conclude that updating with $\delta_n=-N^{-1}\Delta u_{N-1}$ leads to $\Delta \tilde u_{N-1}=0$. The error estimate is obtained by rewriting \eqref{Thm1:result} with updated variables,
	\begin{align}
		|\tilde f(t)- \tilde f_r(t)|&\leq\frac{2\pi\|u\|_{\ell^{\infty}}}{N^2}\left(\big\|(\varphi^{K})''\big\|_{L^1}+\big\|(\varphi^{K})'\big\|_{L^{\infty}}\right).
	\end{align}
\end{proof}

The updated second order scheme has a clear advantage over the updated first order scheme. While the updated first order scheme yields at best an error of $O(N^{-1})$ which is already achieved by the classical scheme, the updated second order scheme yields $O(N^{-2})$ which is not possible to obtain without an update as long as $\sum\nolimits_{n=0}^{ N-1}y_{n}\neq\sum\nolimits_{n=0}^{ N-1}q_{n}$.

\subsection{Outlook: higher order $\varSigma\varDelta$ schemes}
Constant updates similar to \eqref{update1} and \eqref{update2} can be constructed for a $\Sigma\Delta$ scheme of arbitrary order. For the $m$-th order scheme we can have
\begin{equation}\label{updatem}
\delta=-N^{-1}\Delta^{m-1} u_{N-1}
\end{equation}
leading to $\Delta^{m-1} \tilde u_{N-1}=0$.
Apart from setting the lower bound to zero in \eqref{rvsz}, this update does not provide much improvement for higher order schemes. For any scheme of order $m>2$, the error $\tilde e(t)$ will be still of $O(N^{-2})$.

For instance, in the third order case before the update we have two unwanted large terms of $O(N^{-1})$ and $O(N^{-2})$ in \eqref{Thm1:result} involving the remainders $|\Delta^2 u_{N-1}|$ and $|\Delta u_{N-1}|$, respectively. Here, we cannot achieve both  $\Delta^2 \tilde u_{N-1}=0$ and $\Delta \tilde u_{N-1}=0$ 
by a constant update and get $O(N^{-2})$ at best. In \cite{GKKS2019}, we proposed a non-constant, slowly varying sequence $\delta_n$ of updates sampled from an appropriately shifted and scaled sinusoid. Despite allowing for $\tilde e(t)$ of $O(N^{-3})$, this approach has several drawbacks. Firstly, it requires additional assumptions on the differences $q_n-\tilde q_n$ which obstruct the development of a solid theory. Secondly, the scaling factor of the sinusoid is of $O(\tan(\frac{\pi}{N}))$ which is close to $O(N^{-1})$. In applications such as digital halftoning, these introduced oscillations may be more visible to the human visual system than a constant update.


\section{Numerical experiments}

For our numerical experiments we consider the signal 
$f(t)=0.1\sin(5t)\cos(10t)+0.2$ with bandwidth $K=15$ and the reconstruction formulas \eqref{shannonquantized_circle} and \eqref{repkernel}. 

Fig. 1.(a) shows the reconstruction error $\tilde e(t):=|\tilde f(t)-\tilde f_r(t)|$ for the updated $\Sigma\Delta$ schemes of order $m=1,2$. We can observe the desired error decay analogous to the error decay of classical schemes on the real line. However, the error $\tilde e(t)$ only takes the shifted version of the original function into account. In Fig. 1.(b) and Fig. 1.(c) we show the difference between $f(t)$, the original function, and $f_r(t)$, its reconstructed version. We also show the difference between $f(t)$ and $\tilde f_r(t)$, the function reconstructed from updated samples. Due to the boundary term elimination caused by the update, the large error at the initialization point $0$ (or $2\pi$) is reduced. Note that while for the classical scheme the error keeps oscillating around zero, for the updated scheme it keeps oscillating around $\delta$, as constant shifts of $O(N^{-1})$ cannot be avoided. Nevertheless, the updated scheme optimizes oscillation amplitude around a constant shift, thus leading to less visible artifacts.


\begin{figure}[ht]
		\begin{minipage}[b]{0.8\columnwidth}
			\centering
			\includegraphics[width=1\columnwidth]{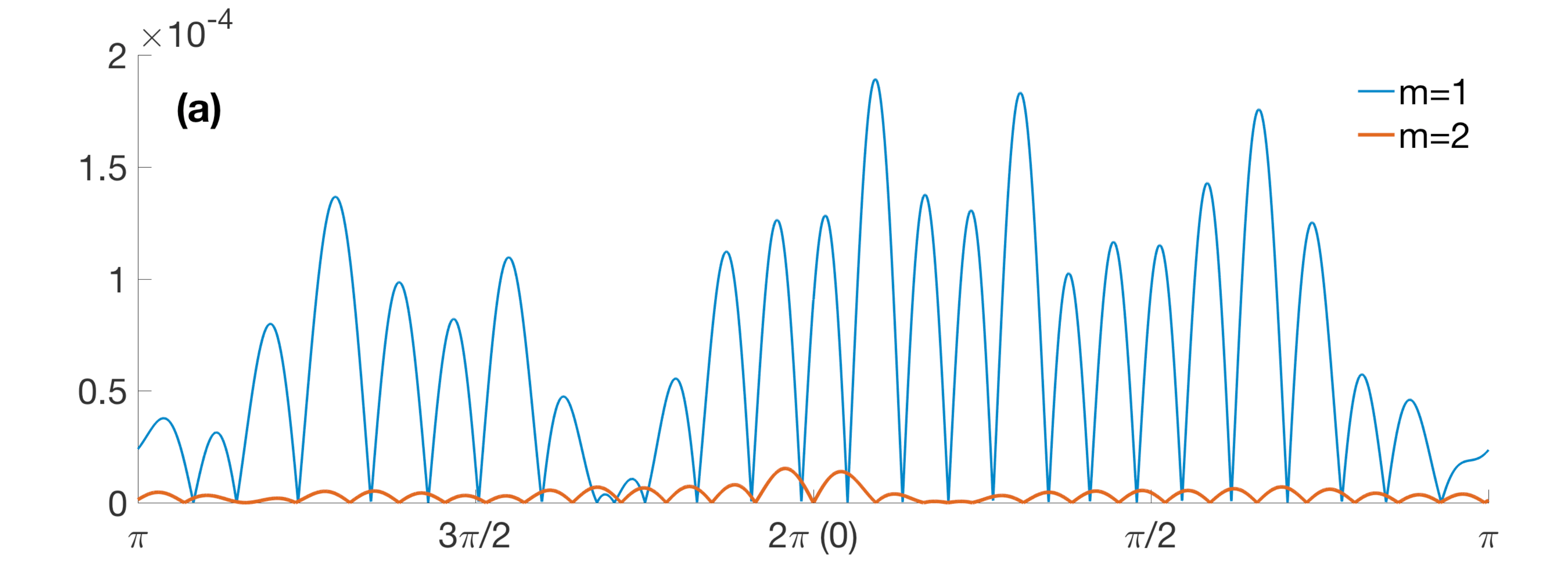}
		\end{minipage}
	\begin{minipage}[b]{0.8\columnwidth}
		\centering
		\includegraphics[width=1\columnwidth]{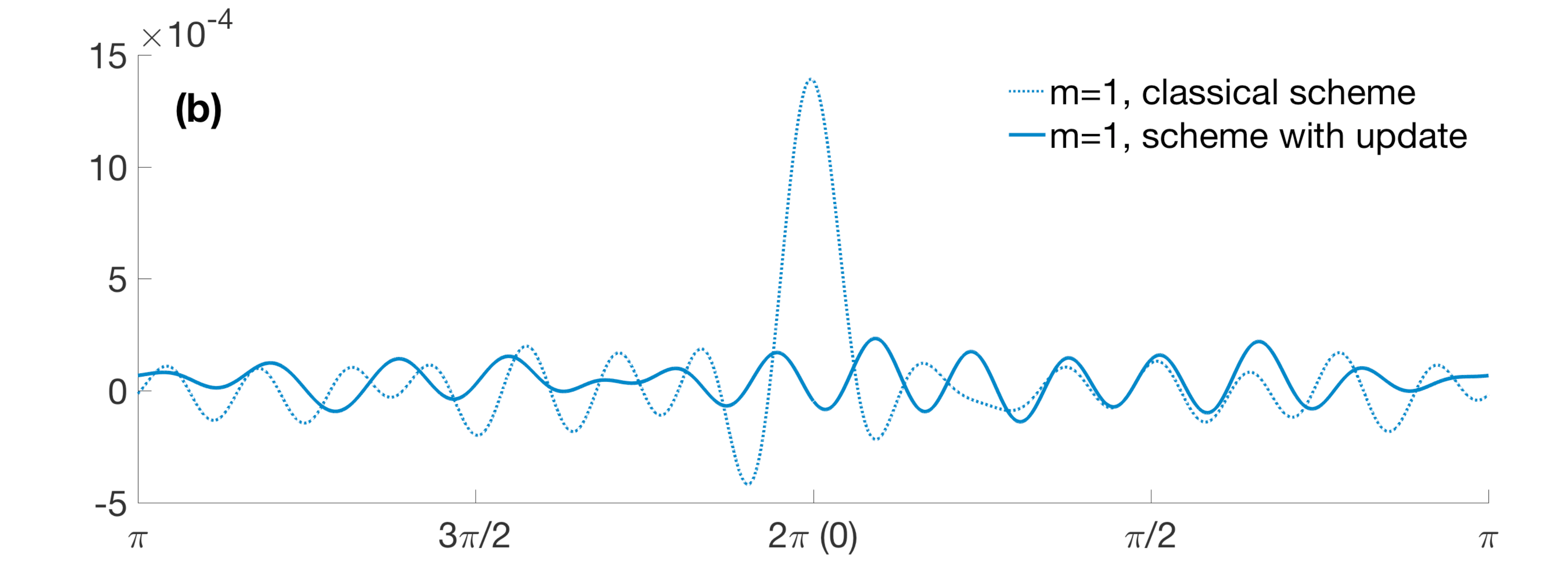}
	\end{minipage}
	\begin{minipage}[b]{0.8\columnwidth}
		\centering
		\includegraphics[width=1\columnwidth]{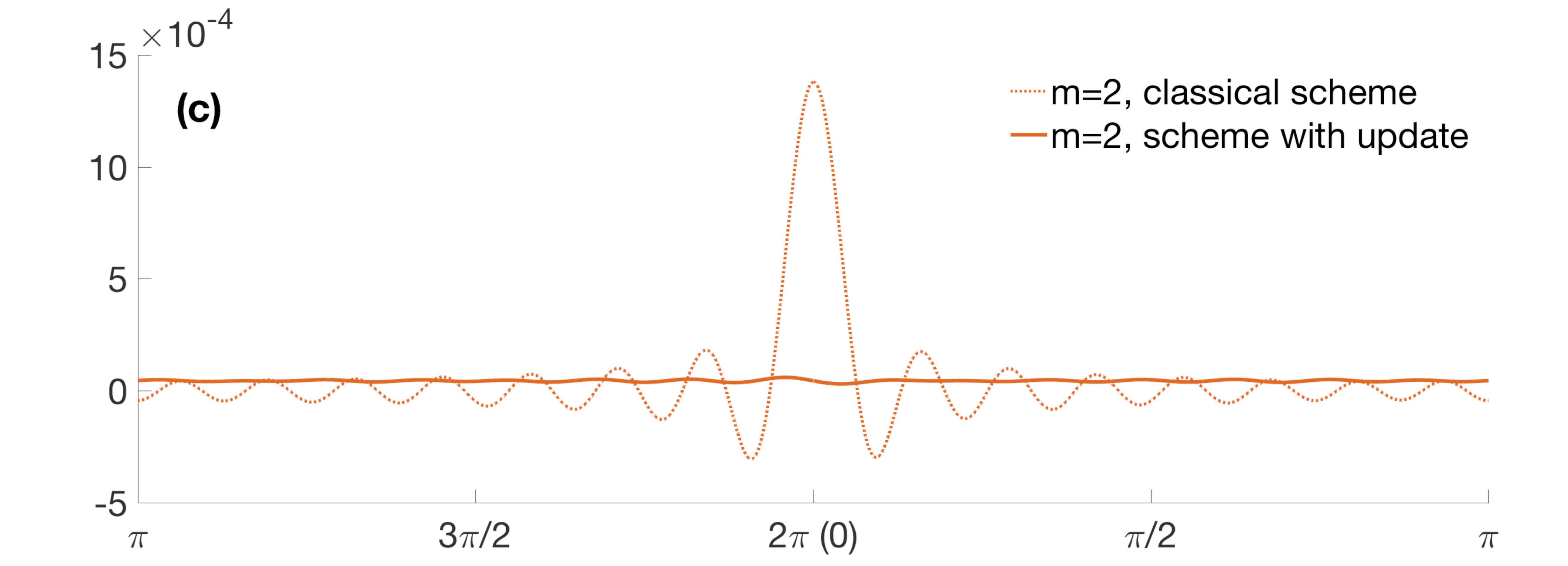}
	\end{minipage}
	\centering
	
	\caption{Reconstruction error for $\Sigma\Delta$ schemes on the circle. In all experiments we use $f(t)=0.1\sin(5t)\cos(10t)+0.2$ with bandwidth $K=15$, $N=9002$ and feedback filters $(0,1)$,  $(0,4/3,0,0,-1/3)$ for the 1-st and the 2-nd order schemes, respectively. \textbf{\textsf{(a)}} Comparison of errors $\tilde e(t):=|\tilde f(t)-\tilde f_r(t)|$ for updated $\Sigma\Delta$ schemes of order $m=1,2$. \textbf{\textsf{(b)}} Errors $f(t)-f_r(t)$ and $f(t)-\tilde f_r(t)$ for the 1-st order $\Sigma\Delta$ scheme (cf. Theorem 1). \textbf{\textsf{(c)}} Errors $f(t)-f_r(t)$ and $f(t)-\tilde f_r(t)$ for the 2-nd order $\Sigma\Delta$ scheme (cf. Theorem 2).}
\end{figure}

\section{Conclusions and future work}
In this paper, we justified the necessity of updating the classical $\Sigma\Delta$ modulation scheme if the function to be quantized is defined on the circle. We proposed the updates for the first and the second order $\Sigma\Delta$ schemes and complemented our results with the reconstruction error analysis.
For the schemes of order $m\geq3$, designing the update becomes less trivial. The important open question is how to find an optimal update for a $\Sigma\Delta$ scheme of arbitrary order. Furthermore, it is of paramount interest to extend the results to more complicated manifold models.

\pagebreak

\bibliographystyle{plain}
\bibliography{refs}
\end{document}